\documentclass[twocolumn,showpacs,amsmath,amssymb,prl]{revtex4}
\usepackage[all,2cell]{xy}
\usepackage{color}
\usepackage{nicefrac}
\usepackage[latin1]{inputenc}
\usepackage{epsfig}
\usepackage{circuitikz}
\usepackage{amsthm}
\usepackage{graphicx}
\usepackage{amsmath}
\newcommand{\niton}{\not\mathrel{\text{\reflectbox{$\in$}}}}

\newcommand*{\Scale}[2][4]{\scalebox{#1}{$#2$}}
\newcommand{\tc}[1]{{\color{black} #1}}
\newcommand{\bs}[1] {\boldsymbol{#1}}
\newcommand{\ba}{\begin{array}}
\newcommand{\ea}{\end{array}}

\newcommand{\C}{\mathcal{C}}
\newcommand{\Q}{\widetilde{F}}
\newcommand{\F}{F}

\newcommand{\W}{\mathbb{W}}
\newcommand{\M}{\mathbb{M}}
\newcommand{\U}{\mathbb{U}}
\newcommand{\R}{\mathbb{R}}
\renewcommand{\P}{\mathbb{P}}
\newcommand{\prob}{P}
\newcommand{\st}{{\mathrm{st}}}
\renewcommand{\phi}{J}

\newtheorem{theorem}{Theorem}

\begin{document}

\title{Effective Thermodynamics for a Marginal Observer}

\author{Matteo Polettini}
\author{Massimiliano Esposito}
\email{matteo.polettini@uni.lu}
\affiliation{Physics and Materials Science Research Unit, University of Luxembourg, Campus
Limpertsberg, 162a avenue de la Fa\"iencerie, L-1511 Luxembourg (Luxembourg)} 
 
\date{\today}

\begin{abstract}
Thermodynamics is usually  formulated on the presumption that the observer has complete information about the system he/she deals with: no parasitic current, exact evaluation of the forces that drive the system. For example, the acclaimed Fluctuation Relation (FR), relating the probability of time-forward  and time-reversed trajectories, assumes that the measurable transitions suffice to characterize the process as Markovian (in our case, a continuous-time jump process). However, most often the observer only measures a {\it marginal} current. We show that he/she will nonetheless produce an {\it effective} description that does not dispense with the fundamentals of thermodynamics, including the FR and the 2nd law. Our results stand on the mathematical construction of a hidden time reversal of the dynamics, and on the physical requirement that the observed current only accounts for a single transition in the configuration space of the system. We employ a simple abstract example to illustrate our results and to discuss the feasibility of generalizations.
\end{abstract} 

\pacs{05.70.Ln,02.50.Ga}

% 05.70.Ln Nonequilibrium Thermodynamics
% 02.50.Ga Markov processes
\maketitle

Nonequilibrium thermodynamics is a discourse about a set of currents  \tc{$\bs{\phi} = (\phi_\alpha)_{\alpha =1}^n$} flowing across an open  system and about their cost, viz. the forces or affinities  \tc{$\bs{F} = (\F_\alpha)_{\alpha =1}^n$} that need to be exerted to sustain such flows. Statistical Mechanics makes currents into random variables  whose joint probability \tc{$P(\bs{\phi})$} is determined by the occurrence of an underlying Markovian trajectory $\omega$ in configuration space. Be \tc{they} electric currents and voltages, heat flows and temperature gradients, etc., thermodynamics establishes certain universal truths, among which, in a progression that covers the 19th to 21st century time frame: ({\it 2nd law}) the average steady-state rate at which entropy is delivered to the environment  \tc{$\bs{F} \cdot \langle\bs{\phi} \rangle$} is non-negative;
({\it FDR}) near equilibrium a perturbation of the currents leads to \emph{D}issipation, which is \emph{R}elated to the current's \emph{F}luctuations at equilibrium; ({\it FR}) the probability of observing positive \emph{F}luctuations of the entropy \tc{production} is exponentially favoured if \emph{R}elated to negative ones \cite{andrieux}:
\begin{align}
\tc{\frac{P(\bs{\phi})}{P(-\bs{\phi})} = \exp \left(t \, \bs{F} \cdot  \bs{\phi} \right) . \label{eq:fr}}
\end{align}
Notoriously, the FR holds in the long-time limit, but it can be formulated at all times upon a proper choice of the equilibrium distribution from which the initial configuration is sampled \cite{udotrans,bulnestrans,polettiniFT}.

The above results require that all possible sources of dissipation are known. However, most often such a complete description is impossible to achieve, neither experimentally (\tc{leakages}) nor theoretically (coarse-grained degrees of freedom \cite{espocg,bo}). How can then thermodynamics be established in a consistent way? In this letter we consider a {\it marginal} observer who only monitors one current $\phi = \phi_{\alpha\equiv1}$ flowing between two discrete configurations, and who controls a parameter that only affects the rates at which transitions between \tc{those} configurations occur. We show that \tc{the} observer would guess an  {\it effective} affinity such that the following occurs: (i) a {\it marginal} FR for the observed current holds at large times, for any given statistical distribution of the initial state, or at all times upon a proper choice of the initial \tc{distribution} (see also Refs.\,\cite{hartich,sagawa,rosinberg,crooks} for alternative formulations of a marginal FR); (ii) The effective affinity has a clear interpretation as the counteracting force needed to {\it stall} the current, i.e. to make it vanish;  (iii) there exists a simple operational procedure to determine the effective affinity; (iv) the FDR at stalling derived in Ref.\,\cite{altaner} follows as a consequence; (v) the marginal  FR requires a new notion of ``hidden time-reversal'' (HTR), an interesting mathematical construct rich in properties, though lacking a clear operational implementation. In some sense HTR inverts the dynamics in the hidden subspace (see Ref.\,\cite{qian} for an analog for underdamped Langevin dynamics).

Our conclusions stand on several mathematical propositions. We only include in the letter the outline of those proofs whose elements are necessary to interpret the results, leaving details to the Supplemental Material.

\paragraph{Story of a marginal observer.} Consider a {\it gedankenobserver} who can operate a measurement apparatus that only resolves two configurations $i = 1,2$. To him, the rest of the system is a black box:
\begin{align}
\begin{array}{c}\xymatrix{1 \ar@{-}[rr] \ar@{-}[dr] \ar@{..}@/_/[dr] \ar@{-}@/^/[dr] & & 2  \ar@{..}@/^/[dl]  \ar@{-}@/_/[dl]  \ar@{-}[dl] \\
& \rule{1cm}{1cm}  & 
}\end{array}.
\end{align}
The observer controls the rate $w_{12}(x)$ at which, given that the apparatus recorded $2$, the next measurement is $1$, and similarly for $w_{21}(x)$. The observer also records how often configuration $2$ occurs with respect to $1$. This latter piece of information is quantified by the ratio $p_1(x)/p_2(x)$, where $\vec{p}\, (x)$ is the probability of being anywhere in configuration space, including in the black box, whose population $p_\blacksquare(x)$ is unknown. Finally, the observer handles a \tc{parameter} $x$ by which he tunes \tc{the values of the rates}. We assume that rates inside the black box do not depend on $x$, and that $x$ satisfies
\begin{align}
\frac{w_{12}(x)}{w_{21}(x)} = \frac{p^{\mathrm{eq}}_1(x)}{p^{\mathrm{eq}}_2(x)} = e^x
\end{align}
where $\vec{p}^{\;\mathrm{eq}}$ \tc{is} the detailed balanced steady state \tc{when the connections to the black box are all ``cut'' and and the edge is disconnected from the network, and as a consequence} no net current \tc{is} measured along $1-2$.
\tc{Let us give a more physical insight into the parametrization via two examples. Rates of systems in contact with heat reservoirs satisfy {\it local detailed balance}, e.g. $\log(w_{12}/w_{21}) =  \nicefrac{(\varepsilon_2 - \varepsilon_1)}{k_B T_{12}}$, with $\varepsilon_1,\varepsilon_2$ energy levels and $T_{12}$ the temperature of the environmental degrees of freedom that interact with the \tc{observable transition, while other edges are at different temperatures}. The parameter in this case would affect the {\it external} temperature of the reservoir affecting transition $1-2$ and not the {\it internal} energy levels, which do affect other transitions. A similar example is that of a specific chemical reaction $\mathrm{X}_1 + \mathrm{C} \leftrightharpoons \mathrm{X}_2$ belonging to a larger network of chemical reactions, where an internal reactant $\mathrm{X}_1$ is coupled to a {\it chemostat} $\mathrm{C}$ at fixed but controllable concentration \cite{poleCN}. In this case, it is only possible to modify the concentration of $\mathrm{C}$, while the internal ``energy levels''  are unmodifiable integer particle numbers} \footnote{\tc{However, as regards our theory, the case of chemical reaction networks is significantly more complicated because currents are defined along several edges, and because the topology of the chemical network affects its thermodynamics \cite{polewachtel}.}}.

If detailed balance is not observed and the observer measures a different value of $p_1(x)/p_2(x) \neq p_1^{\mathrm{eq}}(x)/p_2^{\mathrm{eq}}(x)$, a current $\langle\phi\rangle(x) = w_{12}(x)p_2(x) - w_{21}(x)p_1(x)$ flows along $1-2$. The observer then needs to formulate a minimal model that is compatible with this observation. The simplest possible setup is
\begin{align}
\ba{c}\xymatrix{1 \ar@{-}@/^1pc/[r]^{w(x)}   \ar@{-}@/_1pc/[r]_{\widetilde{w}}  & 2} \ea.
\end{align}
In this minimal model the black box is responsible of returning an event at $1$ or $2$ at some effective rates $\widetilde{w}_{12},\widetilde{w}_{21}$. While this is not a viable approximation of the black-box dynamics, which is described by more advanced projection techniques \cite{sollich}, it is enough to replicate the average  steady-state measurements of our observer.

Our first result is the determination of the effective rates. Notice that the minimal model must satisfy the global detailed balance condition obtained by lumping the transitions  $w$ and $\widetilde{w}$:
\begin{align}
\frac{w_{21}(x) + \widetilde{w}_{21}}{w_{12}(x) + \widetilde{w}_{12}} = \frac{p_2(x)}{p_1(x)}. \label{eq:condition}
\end{align}
We will now compare this quantity with the truth-of-matter of the complete system. To illustrate our results, we employ the following example:
\begin{align}
\ba{c}\xymatrix{
1 \ar@{-}[r] & 2  \ar@{-}[d]  \\
 \ar@{-}[u]  4 \ar@{-}[r] & 3  \ar@{-}[ul] 
} \ea, \qquad
\ba{c}\xymatrix{
1 \ar@_{->}@<-0.4mm>[r] \ar@^{-}@<0.0mm>[r] \ar@^{->}@<0.4mm>[r] & 2 
\ar@_{->}@<-0.4mm>[d] \ar@^{-}@<0.0mm>[d] \ar@^{->}@<0.4mm>[d]  \\
  \ar@_{->}@<-0.2mm>[u]   \ar@^{->}@<0.2mm>[u]    4  & 3  \ar@{->}[ul]     \ar@_{->}@<-0.2mm>[l]   \ar@^{->}@<0.2mm>[l] 
}\ea \stackrel{\mathrm{TR}\;}{\longrightarrow}  \ba{c}\xymatrix{
1 \ar@_{<-}@<-0.4mm>[r] \ar@^{-}@<0.0mm>[r] \ar@^{<-}@<0.4mm>[r] & 2 
\ar@_{<-}@<-0.4mm>[d] \ar@^{-}@<0.0mm>[d] \ar@^{<-}@<0.4mm>[d]  \\
  \ar@_{<-}@<-0.2mm>[u]   \ar@^{<-}@<0.2mm>[u]    4  & 3  \ar@{<-}[ul]     \ar@_{<-}@<-0.2mm>[l]   \ar@^{<-}@<0.2mm>[l] 
}\ea. \label{eq:model}
\end{align}
The first diagram depicts the topology, where the graph's {\it edges} connect {\it sites} (configurations). The second \tc{is a pictorial illustration of a possible} steady state \tc{in configuration space, each line representing a given amount of current}, obeying Kirchhoff's law of current conservation at the sites of the graph. The third is its time reversed (for later reference).

We assume that the full system evolves by the Master Equation $d\vec{p}(t)/dt = \W \vec{p}(t)$, with generator $\W_{ij} := w_{ij} - w_i \delta_{i,j} $, where rates $w_{ji},w_{ij} > 0$ are positive along all edges of the graph, and $w_i := \sum_{k\neq i} w_{ki}$ are the exit rates \cite{gaveau}. The steady state $\vec{p}$ is the unique null vector of $\W$. Let $\W_{(j_1,\ldots,j_n|i_1,\ldots,i_m)}$ be the matrix obtained by removing rows $j_1,\ldots,j_n$ and columns $i_1,\ldots,i_m$. Then, {\bf Theorem 1} states that the effective rates are given by
\begin{align}
\widetilde{w}_{12} & \tc{ = \frac{\det \W^{\mathrm{st}}_{(2\tc{|}1)}}{\det \W_{(1,2|1,2)}} = \frac{ \hspace{-.4cm}
\Scale[0.5]{
   \ba{c}\xymatrix{ & \ar[d] \\ \ar[u] & \ar[l]} \ea 
+ \ba{c}\xymatrix{ & \ar@{->}[d] \\ \ar[u] & \ar@{->}[ul] } \ea
+ \ba{c}\xymatrix{   & \ar[d] \\  \ar[r] & \ar[ul] } \ea} }
{ \Scale[0.5]{
   \ba{c}\xymatrix{ &  \ar@{<-}[d] \\  \ar[u] & } \ea
+ \ba{c}\xymatrix{ & \ar@{<-}[d] \\ & \ar@{<-}[l]} \ea
+ \ba{c}\xymatrix{ &  \\ \ar[u] & \ar[l]} \ea
+ \ba{c}\xymatrix{ & \\  \ar[r] & \ar[ul] } \ea
+ \ba{c}\xymatrix{ &    \\  \ar[u] & \ar[ul]  } \ea}} }
\end{align}
and similarly for $\widetilde{w}_{21}$, where $\W^{\mathrm{st}}$ is the {\it stalling} generator obtained from $\W$ by setting $w_{12} \equiv w_{21} \equiv 0$. \tc{Each illustrative diagram on the right-hand side represents a term of the determinant, whereby the multiplication of rates along oriented arrows is implied, e.g. $ \Scale[0.4]{\ba{c}\xymatrix{ & \ar[d] \\ \ar[u] & \ar[l]} \ea  } = w_{14} w_{43} w_{32}$}. Notice that, consistently with the ``no-$x$-in-$\blacksquare$'' assumption, the effective rates do not depend on $x$. The proof of {\bf Theorem 1} relies on some facts in algebraic graph theory, in particular, that the steady-state probability can be written in terms of the minors of the generator \cite{gaveau}, that by the all-minors matrix-tree theorem \cite{chaiken} minors can be given a graphical interpretation in terms of spanning trees, and that by the deletion/contraction principle \cite{sokal} one can single out the trees of the stalling generator and the trees that pass by edge $1-2$.

\paragraph{Effective affinity and stalling.}

Supporting a steady-state current has a thermodynamic cost. According to the theory of Hill-Schnakenberg \cite{hill,schnak}, which is the Markov-process analog of Kirchhoff's mesh analysis for electrical circuits, this cost is quantified by cycle affinities, i.e. the log-ratio of products of rates along an independent set of oriented cycles, in the two directions. In the complete system, there are a number of ``real'' affinities
\begin{align}
\F_\alpha(x) =  \log \prod_{i \gets j \in \mathcal{C}_\alpha} \frac{w_{ij}}{w_{ji}}.
\end{align} Notice that these affinities are either in the form $\F_\alpha(x) = x - x^{\mathrm{ref}}_\alpha$, with $ x^{\mathrm{ref}}_\alpha$ some reference value \cite{polettiniFT}, if cycle $\mathcal{C}_\alpha$ includes the observable transition, or else do not depend on $x$. Hence a variation of $x$ corresponds to a variation of the relevant thermodynamic forces.

In the case of the minimal model, there is only one cycle $1 \longleftarrow 2 \stackrel{\;\sim}{\longleftarrow} 1$ with {\it effective affinity} 
\begin{align}
\Q(x) := \log \frac{w_{12}(x)\widetilde{w}_{21}}{w_{21}(x)\widetilde{w}_{12}}  =  x - x^{\mathrm{st}}
\end{align}
where we define the stalling value of the parameter as
\begin{align}
x^{\mathrm{st}} =  \log \frac{ 
\Scale[0.6]{
 \ba{c}\xymatrix{ & \ar@{<-}[d] \\ \ar@{<-}[u] & \ar@{<-}[l]} \ea
+ \ba{c}\xymatrix{ & \ar@{<-}[d] \\ \ar[u] & \ar@{<-}[ul] } \ea
+ \ba{c}\xymatrix{   & \ar@{<-}[d] \\  \ar[r] & \ar@{<-}[ul]  } \ea
 } }{
\Scale[0.6]{
  \ba{c}\xymatrix{ & \ar[d] \\ \ar[u] & \ar[l]} \ea 
+ \ba{c}\xymatrix{ & \ar@{->}[d] \\ \ar[u] & \ar@{->}[ul] } \ea
+ \ba{c}\xymatrix{   & \ar[d] \\  \ar[r] & \ar[ul] } \ea } } = \log \frac{p^{\mathrm{st}}_1}{p^{\mathrm{st}}_2}.
\end{align}
In this latter expression we recognized $\vec{p}^{\;\mathrm{st}}$ as the steady state of the stalling system where edge $1-2$ is removed altogether,  $\W^{\mathrm{st}} \vec{p}^{\;\mathrm{st}} = 0$. The effective affinity then reads
\begin{align}
\Q = \log \frac{w_{12} \,p_2^{\mathrm{st}}}{w_{21} \, p_1^{\mathrm{st}}}
\end{align}
and it can then be interpreted as the force that is exerted if the system is prepared in the stalling state and transition $1-2$ is suddenly turned on or, conversely, as the extra force that should be applied to stall the current, \tc{as already observed by Qian \cite{qian2}, who dubbed it ``isometric force''}. Such a stalling state is reached when $x = x^{\mathrm{st}}$, where both the effective  affinity and the marginal current vanish, $\langle \phi\rangle(x^{\mathrm{st}}) = 0$. As an important consequence, the effective affinity can be determined operationally by a simple calibration procedure, and not only based on an abstract mathematical construction. The procedure is as follows: tune $x$ to the value $x^\st$ that makes the current vanish along the observable edge. Then $\widetilde{F} = x-x^{\st}$. Despite \tc{the fact that according} to the local observer stalling looks like an equilibrium state, yet in the black box currents of arbitrary magnitude might be flowing \cite{lacoste}.
 
\paragraph{Fluctuations.} So far we considered a minimal model reproducing the observer's steady-state average measurements and predicting the effective thermodynamic cost of sustaining the observable current. The question we now address is whether this marginal description is robust as it comes to fluctuations of the current. 

Establishing the complete FR Eq.\,(\ref{eq:fr}) requires a notion of time-reversed dynamics $\overline{\W} := \P \, \W^T \P^{-1}$, where ${}^T$ denotes matrix transposition and $\P := \mathrm{diag}\,\{p_i\}_i$ is the diagonal matrix whose entries are the steady-state probabilities. Letting $\overline{P}$ be the probability measure associated to the time-reversed dynamics, under time reversal currents change sign in probability $\overline{P}(\tc{\bs{\phi}}) = P(\tc{-\bs{\phi}})$ (see the right-hand side of  Eq.\,(\ref{eq:model}) for an illustration).

We now focus on the marginal probability of the observable current $P(\phi) = \int P(\tc{\bs{\phi}}) \prod_{\alpha \neq 1} d\phi_\alpha $, which, in general, does not satisfy a FR. Yet, we are able to construct a {\it hidden time-reversal dynamics} (HTR) $\widetilde{\W}$ such that the following {\it marginal FR} holds ({\bf Theorem 3})
\begin{align}
\frac{P(\phi)}{\widetilde{P}(-\phi)} = \exp\left( t\,\Q \phi\right), \label{eq:MFR}
\end{align}
where, quite remarkably, the effective affinity is precisely  that estimated by the local observer.

Before introducing the HTR dynamics, let us draw some consequences. Importantly, forward and HTR dynamics differ, $P(\phi) \neq \widetilde{P}(\phi)$, which leads to a substantial difference with respect to  Eq.\,(\ref{eq:fr}). However, it might be that for a  ``fast box'', where there is a time-scale separation between observable and hidden degrees of freedom, $P(\phi) \approx \widetilde{P}(\phi)$, as the case  analyzed in Ref.\,\cite{bulnes} suggests.
While we do derive explicit expressions for the rates of the HTR dynamics, in practice microengineering such rates in the absence of a clear operational procedure seems to be unfeasible, so that the marginal FR might be out of experimental reach. However, \tc{standard manipulations of Eq.\.(\ref{eq:MFR})} lead to the {\it marginal integral FR}
\begin{align}
\Big\langle \exp \left[- t\, \Q(x) \phi\right] \Big\rangle(x) = 1, \label{eq:IFR}
\end{align}
which only depends on the original dynamics, while the HTR  probability $\widetilde{P}$ is traced out. Hence, the latter relation is experimentally accessible. Furthermore, using Jensen's inequality we  obtain the effective second law $ \Q(x) \langle \phi \rangle(x)  \geq 0$; \tc{bounds on the marginal dissipation have been derived in Ref.\,\cite{bisker}}. Taking the second derivative with respect to $x$ and evaluating at stalling, we obtain the nonequilibrium FDR \tc{already} proven in Ref.\,\cite{altaner}:
\begin{align}
\left. \frac{\partial \langle \phi \rangle}{\partial \Q}\right|_{\mathrm{st}}  =  \left.\frac{\tc{\mathrm{var}^t}({\phi})}{2}\right|_{\mathrm{st}}.
\end{align}
Here, $\mathrm{var}^t$ is a properly time-scaled variance, considering that all cumulants of currentlike observables grow linearly in time (see Ref.\,\cite{altaner15a} for details). We also used the fact that, by virtue of the parametrization, the explicit derivative $\nicefrac{d}{dx}$ is equivalent to the implicit variation of the effective affinity $\partial_{\Q}$. Let us emphasize that this is not just a technical subtlety: it underlies the possibility of interpreting our results in operational terms.

\paragraph{Hidden time reversal.}

We now specify the HTR dynamics, by defining the operator
\begin{multline}
\widetilde{\W}  :=  \W  - \W_{\mathrm{st}} +\overline{\W}_{\mathrm{st}}\label{eq:hidden} \\ = \left(\ba{cccc} - w_1 & w_{12}  & w_{31}\frac{p_1^\st}{p_3^\st} & w_{41}\frac{p_1^\st}{p_4^\st}  \\  w_{21} & -w_2 & w_{32}\frac{p_2^\st}{p_3^\st}  & 0 \\  w_{13}\frac{p_3^\st}{p_1^\st}  & w_{23} \frac{p_3^\st}{p_2^\st}  & - w_3 & w_{43}\frac{p_3^\st}{p_4^\st}  \\  w_{14} \frac{p_4^\st}{p_1^\st} & 0 & w_{34} \frac{p_4^\st}{p_3^\st}  & -w_4 \ea \right),
\end{multline}
where $\overline{\W}_{\mathrm{st}} =  \P_\mathrm{st} \W_{\mathrm{st}}^T \P_\mathrm{st}^{-1} $, and $\P_\mathrm{st}$ is the diagonal matrix with entries given by the components of the stalling steady state.  Crucially, $\widetilde{\W}$ can be proven to be a Markov jump-process generator ({\bf Theorem 2}). \tc{Exit rates (diagonal elements) are the same as in the original dynamics $\W$}. Furthermore, the upper-left 2 $\times$ 2 block (the only nonvanishing block of $\W - \W_{\mathrm{st}})$ is unchanged while the rest of the generator undergoes time reversal, \tc{with respect to its proper stalling steady state}. So, in a way, the HTR is the best attempt to invert the dynamics everywhere but in the observable subsystem, whence its name. \tc{The following diagram is a pictorial illustration of HTR of the steady currents:}
\begin{align}
\ba{c}\xymatrix{
1 \ar@_{->}@<-0.4mm>[r] \ar@^{-}@<0.0mm>[r] \ar@^{->}@<0.4mm>[r] & 2 
\ar@_{->}@<-0.4mm>[d] \ar@^{-}@<0.0mm>[d] \ar@^{->}@<0.4mm>[d]  \\
  \ar@_{->}@<-0.2mm>[u]   \ar@^{->}@<0.2mm>[u]    4  & 3  \ar@{->}[ul]     \ar@_{->}@<-0.2mm>[l]   \ar@^{->}@<0.2mm>[l] 
}\ea
\stackrel{\mathrm{HTR}}{\longrightarrow}
\ba{c}\xymatrix{
1 \ar@_{->}@<-0.2mm>[r] \ar@^{->}@<0.2mm>[r] & 2 
\ar@_{->}@<-0.2mm>[d] \ar@^{->}@<0.2mm>[d]  \\
  \ar@{<-}[u]  4  & 3    \ar@{<-}[l]  \ar@_{->}@<-0.4mm>[ul] \ar@^{-}@<0.0mm>[ul] \ar@^{->}@<0.4mm>[ul] 
}\ea.
\end{align}
\tc{Notice that, in this example, the cycle currents' direction is inverted in the black box and preserved along the observable edge, but their intensities cannot be exactly preserved as this would violate Kirchhoff's law of current conservation at the sites.} As regards Kirchhoff's loop law, we can show ({\bf Theorem 5}) that the HTR has affinities $- F_\alpha$ along all cycles that do not include $1-2$ and affinity $2\Q - F_\alpha$ for all cycles that do include $1-2$. Therefore, at a stalling steady state where $\Q = 0$ one has exact inversion of all the affinities and currents, and $\widetilde{\W}(x^\mathrm{st}) = \overline{\W}(x^\mathrm{st})$, which we call the condition of {\it marginal detailed balance}.

The proof of {\bf Theorem 3} relies on a direct  comparison of the path probabilities \cite{weber} $P(\omega)$ and $\widetilde{P}(\omega)$  of the original and the HTR dynamics, and by taking appropriate marginals. Another, more elegant way to derive the same result is via the Generating Function of the (time-)Scaled current's Cumulants (SCGF) $\lambda(q) = \langle \phi \rangle q + \frac{1}{2} \mathrm{var}(\phi) q^2 + \ldots$ in terms of which, after a bilateral Laplace transform, the marginal FR Eq.\,(\ref{eq:MFR}) takes the form of a {\it marginal Lebowitz-Spohn} symmetry \cite{ls}
\begin{align}
\widetilde{\lambda}(q) = \lambda(\Q - q). \label{eq:LS}
\end{align}
Let us outline the proof of this latter fact. To discuss possible generalizations, we allow for a current supported on several edges, $\phi = \sum_{ij} \varphi_{ij}\phi_{ij}$, with $\varphi_{ij} = - \varphi_{ji}$. It is well known that the SCGF is the Perron-Froebenius eigenvalue of the operator $\M(q)$ obtained by replacing off-diagonal entries of $\W$  by $w_{ij} e^{-\varphi_{ij} q}$, while keeping the diagonal exit rates. Then, letting $\R(q)$ be the diagonal matrix with entries the right eigenvector of $\M(q)$, one can construct an {\it auxiliary} Markovian generator $\W(q)= \R(q) \M(q)^T \R(q)^{-1} - \lambda(q) \mathbb{I}$ with the property that a given rare fluctuation of the current according to dynamics $\W = \W(0)$ is the typical current according to $\W(q)$, for some value of $q$ \cite{chetrite}. We show in {\bf Theorem 4} and corollaries that, for a single edge current, at $q =\Q$ the SCGF vanishes, $\lambda(\Q)= 0$, the HTR coincides with the auxiliary dynamics $\widetilde{\W} = \W(\Q)$, and the right eigenvector can be interpreted as stalling steady state $\R(\Q) = \P_\st$. Equation (\ref{eq:LS}) follows from matrix similarity $\Box$.

This formalism allows us to significantly improve on our results. We can go back to the issue of the long-time limit and relax the assumption showing that if the observer prepares the system in the stalling state $\vec{p}^{\;\st}$ by tuning $x$ to $x^{\st}$, and then suddenly turns it back on, the FR holds at all times (see {\bf Theorem 4}).

\begin{figure}
  \centering
  \includegraphics[width=0.95\columnwidth]{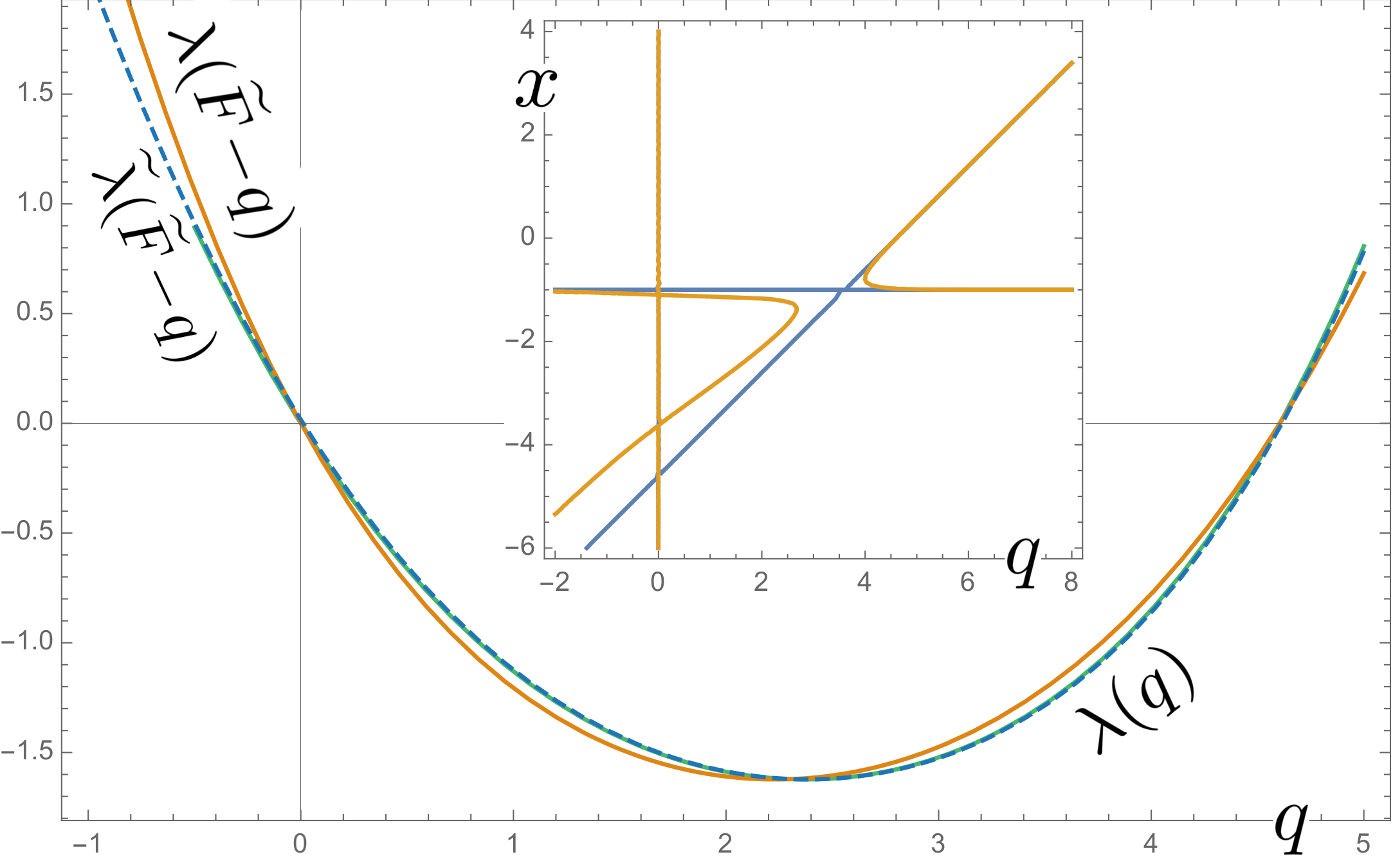} 
\caption{
We consider the model in Eq.\,(\ref{eq:model}) with rates
$1 = w_{31} = w_{14} = w_{32} = w_{43} = w_{21}/(1+x)$
and $10 = w_{13} = w_{41} = w_{23} = w_{34} e^{-y} = w_{12} e^{-x}/(1+x)$. {\bf Main frame}: For $x=y=0$, we plot the SCGF $\lambda(q)$ and its symmetric $\lambda(\Q-q)$; the two do not coincide and only meet at $0$, $\Q/2$ and $\Q$, showing that the FR does not hold. We also plot (dashed) $\widetilde{\lambda}(\Q-q)$, which perfectly coincides with $\lambda(q)$, showing that the FR is restored by the hidden-TR dynamics. {\bf Inset:} When only transition $1-2$ is counted, for $y = 0$, straight lines are the {\it loci} where $\lambda(q,x)\equiv 0$; apart from the trivial $q=0$ and $x=-1$ lines, the diagonal line corresponds to the effective affinity $\Q(x)$, showing that it is linear in $x$. The curvy lines are obtained when we also count $34$ [by setting  $\varphi_{ij} = \delta_{i,1}\delta_{j,2} + \delta_{i,3}\delta_{j,4} - (i \leftrightarrow j)$ and $y = x$]. The effective affinity is no longer linear in $x$.}
\label{fig:image}
\end{figure}

In Fig.\,\ref{fig:image} we plot the SCGF for the original and the HTR dynamics, and for their symmetric obtained by $q \to \Q -q $. The plot clearly shows that the marginal FR holds. The latest argument we outlined might suggest that our results could generalize to the case where several transitions contribute to the observable current; {\it mathematically}, all is needed is that there exists another value $q=\Q$ at which $\lambda(\Q) = 0$. However, in our previous {\it physical} construction it was crucial that the effective affinity was linear in the parameter $x$, and that $\R(\Q)$ could be interpreted as the stalling steady state. This is not the case in general, as shown in the inset in Fig.\,\ref{fig:image}. Hence one must be prudent in claiming generality.

The case of currents supported on several edges, where symmetries enter the picture \cite{polettini2016}, is cogent: a complete treatment is in preparation. However, we can turn the above considerations upside down: if marginal fluctuation and response relations do not work, then there is a nontrivial hidden interaction between the observable and hidden degrees of freedom.

\paragraph*{Aknowledgments.} 

This project could not possibly come to light without thorough discussions trough the years with Bernhard Altaner. The research was supported by the National Research Fund Luxembourg (project FNR/A11/02) and by the European Research Council, project NanoThermo (ERC-2015-CoG Agreement No. 681456).

\setcounter{equation}{0}

\renewcommand{\theequation}{S\arabic{equation}}

\newpage

\section*{Supplemental Material}

\begin{theorem}
\tc{The unique value of the} effective rates $\widetilde{w}_{12}, \widetilde{w}_{21}$ satisfying Eqs.\,(4) and (5) in the main text, for all values of $x$, is given by Eq.\,(7) in the main text.
\end{theorem}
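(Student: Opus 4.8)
The plan is to read the statement as the problem of solving the single scalar constraint Eq.\,(5) --- the global detailed-balance condition of the lumped minimal model of Eq.\,(4) --- for the two unknown \emph{constants} $\widetilde{w}_{12},\widetilde{w}_{21}$, the constancy in $x$ being forced by the ``no-$x$-in-$\blacksquare$'' assumption. The extra leverage that makes two unknowns recoverable from one equation is that Eq.\,(5) must hold for \emph{every} $x$. First I would rewrite the observed ratio on its right-hand side through the matrix-tree theorem \cite{gaveau,chaiken}: writing $T_i$ for the sum over spanning trees oriented toward the root $i$, weighted by the product of their edge rates, one has $p_i\propto T_i$ with an $i$-independent factor, whence $p_2(x)/p_1(x)=T_2(x)/T_1(x)$; by assumption only $w_{12}(x),w_{21}(x)$ depend on $x$.

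The decisive step is the deletion/contraction decomposition \cite{sokal} of these tree sums according to whether a spanning tree uses the observable edge $1$--$2$. A tree rooted at $1$ either avoids the edge --- contributing to the stalling tree sum $T_1^{\st}$ --- or crosses it as $2\to1$, contributing $w_{12}$ times the spanning-tree sum $C$ of the graph in which $1$ and $2$ are contracted to a single node; symmetrically at the root $2$, where the edge can only be crossed as $1\to2$. As any tree uses the edge at most once, the dependence on the observable rates is exactly affine,
\[ T_1 = T_1^{\st} + w_{12}\,C, \qquad T_2 = T_2^{\st} + w_{21}\,C, \]
with $T_1^{\st},T_2^{\st},C$ all independent of $x$.

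I would then substitute into Eq.\,(5) and cross-multiply. The payoff of the contraction is that the bilinear terms $w_{12}w_{21}C$ cancel identically between the two sides, leaving an expression linear in the three functions $1,\,w_{12}(x),\,w_{21}(x)$. Since the identity holds for all $x$ and these functions are linearly independent --- guaranteed by the genuinely non-constant ratio $w_{12}/w_{21}=e^{x}$ of Eq.\,(3) --- each coefficient must vanish separately. Two of them read $\widetilde{w}_{12}\,C=T_1^{\st}$ and $\widetilde{w}_{21}\,C=T_2^{\st}$, the third being automatically implied, so that
\[ \widetilde{w}_{12}=\frac{T_1^{\st}}{C}, \qquad \widetilde{w}_{21}=\frac{T_2^{\st}}{C}, \]
which yields both existence and uniqueness.

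It remains to recast these tree sums as the minors of Eq.\,(7), and here I expect the only genuine obstacle: the sign bookkeeping. By the all-minors matrix-tree theorem the denominator is $\det\W_{(1,2|1,2)}=(-1)^{N-2}\,C$ (with $N$ the number of sites), the two-tree forest sum rooted at $\{1,2\}$. For the numerator I would use that $\W^{\st}$ is itself a generator, with vanishing column sums, so that the minors obtained by deleting the fixed column $1$ coincide up to the sign of the deleted row, $\det\W^{\st}_{(2|1)}=(-1)^{1+2}\det\W^{\st}_{(1|1)}$; combined with $\det\W^{\st}_{(1|1)}=(-1)^{N-1}T_1^{\st}$ this gives $\det\W^{\st}_{(2|1)}=(-1)^{N}T_1^{\st}$. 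Since $(-1)^{N}=(-1)^{N-2}$, the alternating signs of numerator and denominator cancel in the quotient, reproducing the manifestly positive $\widetilde{w}_{12}=T_1^{\st}/C$ and matching the diagrams on the right of Eq.\,(7); the deleted column $1$ in $\W^{\st}_{(2|1)}$ is precisely what selects the root $1$ of the tree sum. The companion identity for $\widetilde{w}_{21}$ follows by the $1\leftrightarrow2$ relabeling. $\Box$
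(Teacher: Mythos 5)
Your proof is correct and follows essentially the same route as the paper's: both rest on expressing $p_1/p_2$ through the matrix-tree theorem, splitting the tree sums (equivalently, the minors) by deletion/contraction into an $x$-independent stalling part plus a term linear in the observable rate, and then reading Eq.\,(5) as an identity in $x$. You are in fact more explicit than the paper on the two delicate points --- the cancellation of the bilinear term after cross-multiplication and the sign bookkeeping relating the rooted-tree sums $T_1^{\mathrm{st}}, T_2^{\mathrm{st}}, C$ to the minors of Eq.\,(7) --- which the paper compresses into ``since this must hold for all $x$, we conclude.''
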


\begin{proof}
\tc{Let us expand} the vanishing determinant of the generator with Laplace's formula \tc{along the $i$-th row}
\begin{align} 
0 = \det \W = \sum_{j} \W_{ij} (-1)^{i+j} \det \W_{(i\tc{|}j)}.
\end{align}
\tc{Comparing to the steady-state equation $\sum_j \W_{ij} p_j = 0$,} one finds that the steady-state probability can be written in terms of the minors of the generator as
\begin{align} 
p_j \propto (-1)^{i+j} \det \W_{(i\tc{|}j)}, 
\end{align}
independently of $i$. \tc{Let us consider the minors $\det \W_{(2\tc{|}1)}$ and $\det \W_{(1\tc{|}2)}$. Expanding the first with Laplace's formula along the second column, first row, we obtain}
\begin{align} 
\det \W_{(2\tc{|}1)}(x) = w_{12}(x) \det \W_{(1,2|1,2)} + \det \W^{\mathrm{st}}_{(2\tc{|}1)},
\end{align}
where $\W^{\mathrm{st}}$ is the {\it stalling} generator obtained from $\W$ by setting $w_{12} \equiv w_{21} \equiv 0$. \tc{We can proceed in a similar way  for $\det \W_{(1\tc{|}2)}$. A comment is in order:} By the all-minors matrix-tree theorem in algebraic graph theory \cite{chaiken}, minors can be given a graphical representation in terms of oriented rooted spanning trees, \tc{and the above formula can be interpreted as a {\it deletion-contraction formula}  \cite{sokal} stating that spanning trees either pass by edge $1-2$ (contraction) or do not (deletion)}.

We then obtain
\begin{align} 
& \frac{p_1(x)}{p_2(x)}
= \frac{\det \W_{(2\tc{|}1)}(x)}{\det \W_{(1\tc{|}2)}(x)} \label{eq:dede} \\ 
& = \frac{ \Scale[0.5]{
  \ba{c}\xymatrix{ \ar@{<-}[r] &  \ar@{<-}[d] \\  \ar[u] & } \ea
+ \ba{c}\xymatrix{ \ar@{<-}[r] & \ar@{<-}[d] \\ & \ar@{<-}[l]} \ea
+ \ba{c}\xymatrix{ \ar@{<-}[r] &  \\ \ar[u] & \ar[l]} \ea
+ \ba{c}\xymatrix{ \ar@{<-}[r]  & \\  \ar[r] & \ar[ul]  } \ea
+ \ba{c}\xymatrix{    &  \ar[l] \\ \ar[u] & \ar@{->}[ul]  } \ea 
+ \ba{c}\xymatrix{ & \ar[d] \\ \ar[u] & \ar[l]} \ea 
+ \ba{c}\xymatrix{ & \ar@{->}[d] \\ \ar[u] & \ar@{->}[ul] } \ea
+ \ba{c}\xymatrix{   & \ar[d] \\  \ar[r] & \ar[ul] } \ea 
} }{\Scale[0.5]{
\ba{c}\xymatrix{ \ar@{->}[r] &  \ar@{<-}[d] \\  \ar[u] & } \ea
+ \ba{c}\xymatrix{ \ar@{->}[r] & \ar@{<-}[d] \\ & \ar@{<-}[l]} \ea
+ \ba{c}\xymatrix{ \ar@{->}[r] &  \\ \ar[u] & \ar[l]} \ea 
+ \ba{c}\xymatrix{ \ar@{->}[r]  & \\  \ar[r] & \ar[ul]  } \ea
+ \ba{c}\xymatrix{   \ar[r]  &  \\ \ar[u] & \ar@{->}[ul]  } \ea
+ \ba{c}\xymatrix{ & \ar@{<-}[d] \\ \ar@{<-}[u] & \ar@{<-}[l]} \ea
+ \ba{c}\xymatrix{ & \ar@{<-}[d] \\ \ar[u] & \ar@{<-}[ul] } \ea
+ \ba{c}\xymatrix{   & \ar@{<-}[d] \\  \ar[r] & \ar@{<-}[ul]  } \ea
}} \nonumber
\end{align}
where we also provided the graphical representation. Notice that there are five trees that pass through edge $1-2$ and three trees that do not.

Comparing Eq.\,(\ref{eq:dede}) with Eq.\,(5) in the main text we obtain
\begin{align}
\frac{w_{21}(x) + \widetilde{w}_{21}}{w_{12}(x) + \widetilde{w}_{12}} = \frac{ w_{12}(x) \det \W_{(1,2|1,2)} + \det \W^{\mathrm{st}}_{(2\tc{|}1)}}{ w_{21}(x) \det \W_{(1,2|1,2)} + \det \W^{\mathrm{st}}_{(1\tc{|}2)}} . \label{eq:condition}
\end{align}
Since this must hold for all $x$, we conclude.
\end{proof}

\begin{theorem}
The operator $\widetilde{\W}$ is the generator of a Markov jump process.
\end{theorem}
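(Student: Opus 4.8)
The plan is to verify the two defining properties of a continuous-time Markov jump generator, using the convention $\W_{ij} = w_{ij} - w_i \delta_{i,j}$ fixed by the master equation $d\vec{p}/dt = \W\vec{p}$: namely (i) non-negativity of the off-diagonal entries, $\widetilde{\W}_{ij} \geq 0$ for $i \neq j$, and (ii) conservation of probability, i.e.\ vanishing column sums $\sum_i \widetilde{\W}_{ij} = 0$ for every $j$. Everything will flow from the additive structure $\widetilde{\W} = \W - \W_\st + \overline{\W}_\st$, together with the fact that $\overline{\W}_\st = \P_\st \W_\st^T \P_\st^{-1}$ is the time reversal of $\W_\st$ with respect to its own stationary distribution $\vec{p}^{\,\st}$.

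For property (ii) I would argue by linearity. Both $\W$ and the stalling generator $\W_\st$ are bona fide generators, hence have vanishing column sums. For the reversed piece, the $j$-th column sum is $\sum_i (\overline{\W}_\st)_{ij} = (p_j^\st)^{-1} \sum_i p_i^\st (\W_\st)_{ji} = (p_j^\st)^{-1} (\W_\st \vec{p}^{\,\st})_j = 0$, where the last equality is precisely the stationarity condition $\W_\st \vec{p}^{\,\st} = 0$ used earlier. Summing the three contributions gives $\sum_i \widetilde{\W}_{ij} = 0 - 0 + 0 = 0$. As a consistency check, the diagonal of $\overline{\W}_\st$ equals that of $\W_\st$, since conjugation by the diagonal $\P_\st$ fixes diagonal entries; therefore $-\W_\st + \overline{\W}_\st$ contributes nothing on the diagonal, and the exit rates $\widetilde{\W}_{jj} = \W_{jj} = -w_j$ coincide with the original ones, as anticipated in the main text.

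The crux is property (i), which I would settle through a disjoint-support argument that rules out any dangerous cancellation coming from the subtracted $\W_\st$. Since $\W_\st$ differs from $\W$ only by setting $w_{12} = w_{21} = 0$, the difference $\W - \W_\st$ is supported entirely on the observable edge: its only nonzero off-diagonal entries are $(\W - \W_\st)_{12} = w_{12} > 0$ and $(\W - \W_\st)_{21} = w_{21} > 0$. On the other hand $\overline{\W}_\st$, being the time reversal of a generator, has non-negative off-diagonal entries $(\overline{\W}_\st)_{ij} = p_i^\st (\W_\st)_{ji}/p_j^\st \geq 0$, and it vanishes on the edge $1-2$ precisely because $\W_\st$ does there. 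Hence the two contributions never overlap off-diagonally: for $(i,j)$ off the edge one has $\widetilde{\W}_{ij} = (\overline{\W}_\st)_{ij} \geq 0$, whereas on the edge $\widetilde{\W}_{12} = w_{12} > 0$ and $\widetilde{\W}_{21} = w_{21} > 0$. This establishes (i).

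The single hypothesis needed, and the point I would flag as requiring care, is strict positivity of the stalling steady state $\vec{p}^{\,\st}$, so that the ratios $p_i^\st/p_j^\st$ defining $\overline{\W}_\st$ are finite and strictly positive; this is guaranteed by irreducibility of the stalling dynamics, as deleting edge $1-2$ leaves the network connected and thus admits a unique, everywhere-positive invariant measure. I expect no genuine obstacle beyond making the disjoint-support observation explicit: it is exactly what protects off-diagonal non-negativity under the subtraction of $\W_\st$, and it is also the structural reason the construction deserves the name ``hidden'' time reversal, the inversion acting on the complement of the observable edge while the edge itself is left untouched.
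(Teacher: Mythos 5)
Your proof is correct and follows essentially the same route as the paper's: verify off-diagonal non-negativity (from the sign structure of $\W - \W_{\mathrm{st}}$ and $\overline{\W}_{\mathrm{st}}$) and vanishing column sums, the latter reducing to the stationarity condition $\W_{\mathrm{st}}\vec{p}^{\;\mathrm{st}}=0$. The only differences are cosmetic — you organize the column-sum check by linearity over the three summands rather than by the paper's direct entry-wise computation, and you make explicit the assumption, left implicit in the paper, that the stalling network remains irreducible so that $\vec{p}^{\;\mathrm{st}}$ is strictly positive.
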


\tc{
\begin{proof} The general form of a Markov jump-process generator on a finite state space is a square matrix with positive off-diagonal entries and columns adding up to zero. In our case, diagonal entries are negative, and off-diagonal entries are indeed positive, since all entries of $\W_{\mathrm{st}}$ are at most as large as the entries of $\W$. We then only need to check that columns add up to zero. We obtain
\begin{align} 
\sum_i \widetilde{\W}_{i1} & = -\sum_i w_{i,1} + w_{21} + \frac{1}{p_1^{\st}} \sum_{i>2} w_{1i} \, p_i^{\st}, \\  
\sum_i \widetilde{\W}_{i2} & = -\sum_i w_{i,2} + w_{12} + \frac{1}{p_2^{\st}} \sum_{i>2} w_{2i} \, p_i^{\st}, \\
\sum_i \widetilde{\W}_{ij} & =  \frac{1}{p_j^{\st}} \sum_{i} \W^{\st}_{ji} \, p_i^{\st},  \qquad j > 2 
\end{align}
In fact, a close inspection reveals that the third equation also includes the first two, once we recognize the definition of the stalling generator $\W^{\st}$. The expression vanishes because $\vec{p}^{\;\st}$ is its null eigenvector.
\end{proof}
}

\begin{theorem}
The marginal FR Eq.\,(12) in the main text holds.
\end{theorem}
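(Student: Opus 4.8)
The plan is to prove the marginal FR by descending to the level of individual stochastic trajectories, comparing the path weights of the original and HTR dynamics, and then marginalizing over all histories carrying a fixed value of the current. For a jump trajectory $\omega$ visiting states $m_0,m_1,\ldots,m_N$ with dwell times $\tau_0,\ldots,\tau_N$, the path probability under $\W$ factorizes \cite{weber} into the initial weight $p_{m_0}(0)$, waiting-time factors $e^{-w_{m_k}\tau_k}$, and jump factors $w_{m_k m_{k-1}}$. I would form the ratio $P(\omega)/\widetilde{P}(\overline{\omega})$ with $\overline{\omega}$ the time-reversed trajectory. Because the HTR generator \eqref{eq:hidden} shares the exit rates of $\W$, and $\overline{\omega}$ dwells in the same states for the same durations, every waiting-time factor cancels identically, leaving $P(\omega)/\widetilde{P}(\overline{\omega}) = [p_{m_0}(0)/\widetilde{p}_{m_N}(0)]\prod_k w_{m_k m_{k-1}}/\widetilde{w}_{m_{k-1}m_k}$.

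The heart of the argument is to evaluate this product of jump-factor ratios using the explicit entries of $\widetilde{\W}$, splitting the jumps into two classes. For a hidden jump (any edge other than $1-2$) the HTR rate is the stalling time reversal $\widetilde{w}_{m_{k-1}m_k}=w_{m_k m_{k-1}}\,p^{\st}_{m_{k-1}}/p^{\st}_{m_k}$, so its contribution collapses to the telescoping ratio $p^{\st}_{m_k}/p^{\st}_{m_{k-1}}$. For an observable jump the rates $w_{12},w_{21}$ are left unchanged by HTR, giving $w_{12}/w_{21}$ for a $2\to1$ jump or its inverse for $1\to2$. The key identity is that the very definition of the effective affinity, $\Q=\log(w_{12}p^{\st}_2/(w_{21}p^{\st}_1))$, lets me recast the observable factor in the same telescoping form, $w_{12}/w_{21}=e^{\Q}\,p^{\st}_1/p^{\st}_2$. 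Thus every jump factor equals $p^{\st}_{m_k}/p^{\st}_{m_{k-1}}$, decorated by $e^{\pm\Q}$ exactly when the jump is observable, the sign set by its direction. The stalling-probability ratios then telescope across the entire trajectory — no matter how observable and hidden excursions interleave — to the single boundary ratio $p^{\st}_{m_N}/p^{\st}_{m_0}$, while the accumulated $e^{\pm\Q}$ factors combine into $e^{\Q\,t\phi}$, since the signed count of observable jumps is precisely the time-integrated current $t\phi$.

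Collecting terms yields $P(\omega)/\widetilde{P}(\overline{\omega})=[p_{m_0}(0)/\widetilde{p}_{m_N}(0)]\,[p^{\st}_{m_N}/p^{\st}_{m_0}]\,e^{\Q\,t\phi}$. Preparing both processes in the stalling steady state, $p(0)=\widetilde{p}(0)=\vec{p}^{\;\st}$, makes the two boundary ratios cancel, so that $P(\omega)=e^{\Q\,t\phi}\,\widetilde{P}(\overline{\omega})$ holds trajectory-by-trajectory at \emph{all} times; the large-time statement for an arbitrary initial distribution then follows because $\log(p^{\st}_{m_N}/p^{\st}_{m_0})$ is bounded and hence subextensive. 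I would close by marginalizing: summing over all $\omega$ with integrated current $\phi$, and using that time reversal is a current-flipping bijection onto the HTR histories with current $-\phi$, gives $P(\phi)=e^{\Q\,t\phi}\,\widetilde{P}(-\phi)$, which is \eqref{eq:MFR}.

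The step I expect to be the main obstacle is the careful handling of the boundary contributions: the telescoping produces the clean exponential only once the initial distributions are pinned to $\vec{p}^{\;\st}$, and one must argue separately that an arbitrary initial condition cannot spoil the asymptotic relation. A secondary subtlety is checking that the telescoping is genuinely insensitive to the ordering of observable and hidden jumps, which rests entirely on rewriting both jump types through the same stalling-state ratio. An alternative, more transparent route would bypass trajectories: representing the marginal FR as the Lebowitz--Spohn symmetry \eqref{eq:LS} and deriving it from the matrix similarity $\widetilde{\W}=\W(\Q)$ together with $\lambda(\Q)=0$ and $\R(\Q)=\P_{\st}$ established in \textbf{Theorem 4}. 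I would present the path-space proof as primary, since it is self-contained, and record the spectral derivation as a corollary once \textbf{Theorem 4} is available.
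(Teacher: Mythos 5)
Your proof is correct and follows essentially the same route as the paper's: a trajectory-level comparison of the forward and HTR path weights, exploiting the shared exit rates and the fact that the hidden rates are the stalling time reversal, so that only the telescoping stalling-state ratios and the $e^{\pm\Q}$ factors from observable jumps survive. The only organizational difference is that you compare $P(\omega)$ directly with $\widetilde{P}(\overline{\omega})$ and keep the boundary terms explicit (which yields the all-times statement under stalling initial conditions as a byproduct), whereas the paper compares $P(\omega)$ with $\widetilde{P}(\omega)$ on cyclic trajectories and routes through the complete FR and $\overline{P}(\omega)$, deferring the all-times claim to Theorem~4.
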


\begin{proof}
Consider a trajectory, viz. a single realization of a continuous-time Markov jump process
\begin{align} 
\omega = (i_1,t_1) \to (i_2,t_2) \to \ldots \to (i_N, t_N),
\end{align}
depicted by a successions of sites and of waiting times before a transition to a new site occurs, up to time $t = \sum_{n=1}^N t_n$. To momentarily dispense with boundary terms, irrelevant at long-enough time, we assume that the trajectory is cyclic $i_N \equiv i_0$. Along trajectories, one can define observables. In particular, we are interested in currents, which are the random variables counting the net number of transitions $ij$ from $j$ to $i$:
\begin{align} 
\phi^{\omega}_{ij} = \sum_{n=0}^{N-1} \left(\delta_{i,i_{n+1}} \delta_{j,i_n} -  \delta_{j,i_{n+1}} \delta_{i,i_n}\right).
\end{align}

To any generator of a continuous-time, discrete-state-space Markov jump process is associated a well defined p.d.f. $P(\omega)$ over trajectories $\omega$. Given the initial site, the p.d.f. is a sequence of exponentially distributed waiting times and of instantaneous jump rates (see \cite{weber} for a review)
\begin{align} 
P(\omega) =   e^{-w_{i_N} t_N} \prod_{n=0}^{N-1} w_{i_{n+1},i_n} \, e^{-w_{i_n} t_n} .\label{eq:densityhtr}
\end{align}
Similarly, we can compute $\widetilde{P}(\omega)$. Since the forward and the HTR generators have the same exit rates, the waiting-time terms are exactly the same. Taking the ratio of the path probabilities, we obtain
\begin{align} 
\frac{\prob(\omega)}{\widetilde{\prob}(\omega)} = \prod_{n=0}^{N-1} \frac{w_{i_{n+1},i_n}}{\widetilde{w}_{i_{n+1},i_n}} = \prod_{ij\neq 12, 21} \left( \frac{w_{ij} p^\st_j}{w_{ji}p^\st_i}\right)^{\phi^\omega_{ij}/2} \label{eq:tprob}
\end{align}
\tc{where $\sum_{ij}$ denotes the sum over oriented edges, to be distinguished from the sum $\sum_{i,j}$ over couples of sites (the latter are related to the former by the handshaking lemma in graph theory $\sum_{i,j} = 2 \sum_{ij}$)}. Notice that in Eq.\,(\ref{eq:tprob}) we used the fact that hidden TR and original dynamics have the same rates along edge $1-2$, which then cancel out in the ratio.

We can now multiply by
\begin{align} 
1 =  e^{-\Q\, \phi^{\omega}} \big(\nicefrac{w_{12}p_2^\st}{w_{21} p_1^\st}\big)^{\phi^{\omega}}
\end{align}
to obtain
 \begin{align} 
\frac{\prob(\omega)}{\widetilde{\prob}(\omega)} & =  e^{-\Q\, \phi^{\omega}} \prod_{ij} \left( \frac{w_{ij} }{w_{ji}}\right)^{\phi^\omega_{ij}} =  e^{-\Q\, \phi^{\omega}}  \frac{P(\omega)}{\overline{P}(\omega)}.
\end{align}
In the first passage, the contributions $p_{i_n}^\st/p_{i_{n+1}}^\st$ cancel out because the trajectory is continuous and cyclic. In the second passage we used the complete FR \cite{polettiniFT}
\begin{align} 
\frac{P(\omega)}{\overline{P}(\omega)} = \exp \sum_{ij}  \phi^\omega_{ij}  \log \frac{w_{ij} }{w_{ji}},
\end{align}
that relates the time forward and time reversed probabilities. \tc{Rearranging terms we obtain
\begin{align} 
\overline{P}(\omega)=  e^{-\Q\, \phi^{\omega}}  \widetilde{\prob}(\omega).
\end{align}
We can now integrate over all trajectories that yield current $\phi^{\omega}= -\phi$, to obtain the FR 
$P(\phi)=  e^{\Q\, \phi}  \widetilde{\prob}(-\phi)$, where we used the fact that $\overline{P}(-J) = P(J)$.  
}
\end{proof}

\begin{theorem}
The HTR can be expressed in terms of the tilted operator $\M(q)$ with counting field $q$ associated to the current through edge $1-2$ as
\begin{align} 
\widetilde{\W} = \P_\st \, \M(\Q)^T \, \P_\st^{-1} \label{eq:33}
\end{align}
As consequences, the marginal Lebowitz-Spohn symmetry Eq.\,(17) holds, and the FR Eq.\.(12) holds at all times, provided the initial distribution is sampled from the stalling state.
\end{theorem}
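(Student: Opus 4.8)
The plan is to prove the operator identity Eq.\,(\ref{eq:33}) first and then read off both advertised consequences from it, so the whole theorem rests on a single algebraic fact.

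First I would establish that the stalling steady state $\vec{p}^{\;\st}$ is the right Perron--Frobenius eigenvector of $\M(\Q)$ with eigenvalue zero, so that $\lambda(\Q) = 0$ and $\R(\Q) = \P_\st$. Writing out $[\M(\Q)\,\vec{p}^{\;\st}]_i$, the only entries that differ from the stalling generator $\W_\st$ are those on edge $1$--$2$, where the counting field produces $w_{12} e^{-\Q} p_2^\st$ and $w_{21} e^{\Q} p_1^\st$. By the definition of the effective affinity, $e^{\Q} = w_{12}\, p_2^\st /(w_{21}\, p_1^\st)$, these collapse respectively to $w_{21} p_1^\st$ and $w_{12} p_2^\st$, and the exit-rate bookkeeping on sites $1,2$ reproduces exactly the stalling equation $\W_\st \vec{p}^{\;\st} = 0$; on the remaining sites the tilting is invisible. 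Hence $\M(\Q)\,\vec{p}^{\;\st} = 0$.

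With $\lambda(\Q)=0$ and $\R(\Q)=\P_\st$ in hand, the auxiliary generator at $q=\Q$ reduces to $\W(\Q) = \R(\Q)\M(\Q)^T\R(\Q)^{-1} - \lambda(\Q)\,\mathbb{I} = \P_\st \M(\Q)^T \P_\st^{-1}$, the right-hand side of Eq.\,(\ref{eq:33}). I would then check entry by entry that this coincides with the HTR generator of Eq.\,(\ref{eq:hidden}): the diagonal reproduces the unchanged exit rates $-w_i$; the off-diagonal entries away from edge $1$--$2$ give $(p_i^\st/p_j^\st)\, w_{ji}$, i.e.\ the time reversal $\overline{\W}_\st$ with respect to the stalling state; and the edge $1$--$2$ entries collapse back to $w_{12}, w_{21}$ by the same identity $e^{\Q} = w_{12}\, p_2^\st/(w_{21}\, p_1^\st)$. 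The marginal Lebowitz--Spohn symmetry then follows by upgrading Eq.\,(\ref{eq:33}) to the tilted operators: tilting edge $1$--$2$ of $\widetilde{\W}$ by a counting field $q$ is the same as shifting $\Q \mapsto \Q - q$ inside $\M$, so $\widetilde{\M}(q) = \P_\st\,\M(\Q-q)^T\,\P_\st^{-1}$. Since conjugation by the diagonal $\P_\st$ and matrix transposition both preserve the spectrum, the Perron eigenvalues match, $\widetilde{\lambda}(q) = \lambda(\Q-q)$, which is Eq.\,(\ref{eq:LS}). For the finite-time FR I would compute the current's moment generating functions for both dynamics started in $\vec{p}^{\;\st}$, namely $Z(q,t) = \sum_i [e^{t\M(q)}\vec{p}^{\;\st}]_i$ and $\widetilde{Z}(q,t) = \sum_i [e^{t\widetilde{\M}(q)}\vec{p}^{\;\st}]_i$, insert $e^{t\widetilde{\M}(q)} = \P_\st\,(e^{t\M(\Q-q)})^T\,\P_\st^{-1}$, and use $\P_\st^{-1}\vec{p}^{\;\st} = \vec{1}$ (the all-ones vector) together with a relabelling of the double sum to cancel the conjugators. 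This yields $\widetilde{Z}(q,t) = Z(\Q-q,t)$ exactly, and inverting the bilateral Laplace transform in $q$ gives the all-time relation $P(\phi) = e^{t\Q\phi}\,\widetilde{P}(-\phi)$, i.e.\ Eq.\,(12).

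The hard part will be the finite-time claim. The spectral symmetry $\widetilde{\lambda}(q) = \lambda(\Q-q)$ only constrains the leading exponential growth, so promoting it to an exact identity at every $t$ requires controlling the boundary contributions. The decisive point is that the diagonal conjugators $\P_\st$ and $\P_\st^{-1}$ are absorbed cleanly into the final summation and the initial vector precisely when the system is prepared in $\vec{p}^{\;\st}$, because only then does $\P_\st^{-1}\vec{p}^{\;\st}$ equal $\vec{1}$; for any other initial distribution the residual vector $\P_\st^{-1}\vec{p}$ leaves boundary terms that spoil the symmetry. Making this cancellation watertight — and thereby explaining why the stalling preparation is exactly the right initial condition — is the crux of the argument.
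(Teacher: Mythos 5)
Your proposal is correct and follows essentially the same route as the paper: prove the operator identity $\widetilde{\W} = \P_\st \M(\Q)^T \P_\st^{-1}$, deduce the Lebowitz--Spohn symmetry from the similarity $\widetilde{\M}(q) = \P_\st \M(\Q-q)^T \P_\st^{-1}$, and obtain the all-time FR by propagating the conditional moment generating functions and observing that $\P_\st^{-1}\vec{p}^{\;\st} = \vec{1}$ makes the boundary vectors match. The only (welcome) difference is that you explicitly derive $\M(\Q)\vec{p}^{\;\st}=0$ and route the identity through the Doob transform, where the paper asserts the identity ``by direct calculation'' and relegates the Doob-transform interpretation to a remark.
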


\begin{proof}
The tilted operator $\M(q)$ is obtained by replacing off-diagonal entries $w_{12}$, $w_{21}$ of the generator by $w_{12} e^{-q}$, $w_{12} e^{+q}$ while keeping the diagonal exit rates, e.g.
\begin{align} 
\M(q) = \left(\ba{cccc} - w_1 & w_{12} e^{-q} & w_{13} & w_{14} \\  w_{21} e^{q} & -w_2  & w_{32} & 0 \\  w_{13}   & w_{32}    & - w_3 & w_{34}   \\  w_{41}  & 0 & w_{43}   & -w_4 \ea \right).
\end{align}
Then, identity Eq.\,(\ref{eq:33}) can be proven by direct calculation and by comparison with Eq.\,(15).

Let us now compute the SCGF of the HTR process. Again, it is the Perron-
Froebenius eigenvalue of the operator $\widetilde{\M}(q)$ obtained from $\widetilde{\W}$ by replacing $w_{12}$, $w_{21}$ with $w_{12} e^{-q}$, $w_{12} e^{+q}$. We notice that
\begin{align} 
\widetilde{\M}(q) = \P_\st \M(\Q -q)^T \P_\st^{-1}. \label{eq:fintim}
\end{align}
Then, by matrix similarity one concludes that the spectra of the two matrices coincide, and therefore the marginal Lebowitz-Spohn symmetry holds.

As a remark, it is well known \cite{chetrite} that for all values of $q$ one can build a new Markov generator $\W(q)$ out of $\M(q)$ by Doob's transform $\widetilde{\W} = \R(q) \M(q)^T \R^{-1}(q) - \lambda(q) \mathbb{I}$, where $\R(q)$ is the diagonal matrix containing the entries of the right-null  eigenvector of $\M(q)$. Hence this implies that $\widetilde{\W}$ is the Doob transform evaluated at $\lambda(\Q) = 0$, and furthermore that $\R(\Q) = \P_\st$.

Finally, let us prove that the above relation Eq.\,(\ref{eq:fintim}) implies that the FR holds at all times, along the lines of the proof of the FR in the Appendix of Ref.\,\cite{polettiniFT}. Consider the moment generating function of the current $\zeta(q,t)$ at time $t$, and let $\zeta_i(q,t)$ be the moment generating function conditioned to being at site $i$ at time $t$, such that $\zeta(q,t) = \vec{1}\cdot \vec{\zeta}(q,t)$, where $\vec{1}$ is the vector with all $1$'s \tc{as entries}. It is well known that the latter conditional moment generating function, and the corresponding one for the HTR dynamics, evolve by
\begin{subequations}
\begin{align} 
\frac{d}{dt} \vec{\zeta}(q,t) & = \M(q)\vec{\zeta}(q,t), \\
\frac{d}{dt} \vec{\widetilde{\zeta}}(q,t) & = \widetilde{\M}(q)\vec{\widetilde{\zeta}}(q,t).
\end{align}
\end{subequations}
Let $\vec{\zeta}(q,0) = \vec{\widetilde{\zeta}}(q,t) \equiv \vec{p}(0)$ be the initial configuration of the system. Defining $\U(q,t) = \exp t \M(q)$ and $\widetilde{\U}(t) = \exp t \widetilde{\M}(q)$ we have
\begin{align} 
\vec{\zeta}(q,t) & =  \U(q,t)\vec{p}(0).
\end{align}
Similarly, for the HTR dynamics, evaluating at $q\to \Q-q$ we obtain
\begin{align} 
\vec{\widetilde{\zeta}}(\Q-q,t) & = \widetilde{\U}(\Q - q,t)\vec{p}(0) \\
& = \P_\st \left[\exp t\M(q)^T \right] \P_\st^{-1}  \vec{p}(0)  
\end{align}
where in the latter passage we used Eq.\,(\ref{eq:fintim}). We now evaluate the moment generating functions as
\begin{subequations}
\begin{align} 
\zeta(q,t) & = \vec{1}\cdot \U(q,t)\vec{p}(0) \\
\widetilde{\zeta}(q,t) & = \vec{p}(0) \cdot \P_\st^{-1} \U(q,t) \P_\st \vec{1} 
\end{align} 
\end{subequations}
from which it follows that the choice of $\vec{p}(0) = \vec{p}^{\,\st}$ \tc{makes the moment generating functions} of the current for the forward and for the HTR dynamics identical at all times.
\end{proof}

\begin{theorem}
Let $F(\C)$ and $\widetilde{F}(\C)$ denote the cycle affinities with respect to the forward and marginal dynamics, respectively. We have
\begin{subequations}
\begin{align} 
\widetilde{F}(\C) &= - F(\C) , & & \mathrm{if}\,  \C \niton 1-2  \\
\widetilde{F}(\C) &= - F(\C) + 2\Q , & & \mathrm{if}\, \C \owns 1-2. 
\end{align}
\end{subequations}
\end{theorem}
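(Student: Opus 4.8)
The plan is to read the hidden-time-reversed rates directly off the generator in Eq.~(15) and insert them into the definition of the cycle affinity. Inspection of Eq.~(15) reveals a clean dichotomy: along the observable edge the rates are untouched, $\widetilde{w}_{12}=w_{12}$ and $\widetilde{w}_{21}=w_{21}$, whereas along every hidden edge they are the stalling time reversal of the forward rates,
\begin{align}
\widetilde{w}_{ij} = w_{ji}\,\frac{p^{\st}_i}{p^{\st}_j}, \qquad ij \neq 12,\,21. \label{eq:htrrule}
\end{align}
I would therefore evaluate the elementary ratio $\widetilde{w}_{ij}/\widetilde{w}_{ji}$ edge by edge: on a hidden edge it equals $(w_{ji}/w_{ij})\,(p^{\st}_i/p^{\st}_j)^2$, carrying a squared stalling factor, while on the observable edge it equals $w_{12}/w_{21}$, identical to the forward value and free of stalling factors. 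Since cycle affinities involve only off-diagonal rates, the coincidence of exit rates between $\W$ and $\widetilde{\W}$ is immaterial here.

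For a cycle $\C$ that does not cross the observable edge, every directed edge contributes the inverted forward ratio together with the factor $(p^{\st}_i/p^{\st}_j)^2$. In the product $\prod_{i\gets j\in\C}$ the stalling factors telescope around the closed loop and cancel identically, leaving precisely the reciprocal of the forward cycle product. Taking logarithms gives $\widetilde{F}(\C)=-F(\C)$, the first line of the claim.

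The substantive case is a cycle $\C$ that traverses the observable edge, which I orient, following the convention that defines $\Q$, so that it crosses in the direction $1\gets 2$. Here the telescoping of the stalling factors no longer closes to unity: the single special edge is absent from the hidden product, so the product of the hidden $(p^{\st}_i/p^{\st}_j)^2$ contributions collapses to the residual $(p^{\st}_2/p^{\st}_1)^2$. Combining this residual with the fact that the observable edge supplies the forward ratio $w_{12}/w_{21}$ rather than its inverse, I would factor the full HTR cycle product as the reciprocal forward product times $(w_{12}\,p^{\st}_2/w_{21}\,p^{\st}_1)^2$. Recognising the bracketed quantity as $e^{\Q}$ by the definition of the effective affinity, the surplus is exactly $e^{2\Q}$, whence $\widetilde{F}(\C)=-F(\C)+2\Q$, the second line.

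The main obstacle is the sign-and-telescoping bookkeeping in this second case: one must confirm that the residual stalling factor is $(p^{\st}_2/p^{\st}_1)^2$ regardless of how the cycle routes through the hidden subspace, and that it pairs with the untransformed observable ratio to reconstruct exactly the \emph{square} of $e^{\Q}$, with the correct sign fixed by the chosen cycle orientation. Since this bookkeeping depends only on the endpoints of the hidden path and on the observable edge, and not on the interior of the cycle, the argument is cycle independent and therefore applies to an arbitrary oriented cycle, not merely to the cycles of the four-state example, establishing the stated generality.
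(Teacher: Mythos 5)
Your proposal is correct and follows essentially the same route as the paper's proof: read the HTR rates off Eq.~(15) ($\widetilde{w}_{ij}=w_{ji}\,p^{\st}_i/p^{\st}_j$ on hidden edges, unchanged on $1\!-\!2$), form the edge ratios carrying squared stalling factors, and let them cancel cyclically (fully for cycles avoiding $1\!-\!2$, leaving the residual $(w_{12}p_2^{\st}/w_{21}p_1^{\st})^2=e^{2\Q}$ otherwise). The only difference is cosmetic bookkeeping in the second case, where the paper completes the product over the full cycle and subtracts the spurious $1\!-\!2$ term rather than tracking the telescoping residual directly.
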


\begin{proof}
For all cycles that do not contain edge $1-2$ one has
\begin{align}
\widetilde{F}(\C) & = \log \prod_{ij \in \C}  \frac{\widetilde{w}_{ij}}{\widetilde{w}_{ji}} = \log \prod_{ij \in \C}  \frac{w_{ji}  p^\st_i/p^\st_j }{w_{ij} p^\st_j/p^\st_i}  
\end{align}
All $p^\st_i$ terms \tc{cyclically} cancel out one \tc{another}, so that the last term can be identified with $-F(\C)$. As regards cycles containing edge $1-2$, one has
\begin{align} 
\widetilde{F}(\C) & =\log \prod_{\substack{ij \in \C \\ ij \neq 12}}    \frac{w_{ji}  p^\st_i/p^\st_j }{w_{ij} p^\st_j/p^\st_i}  + \log \frac{w_{12}}{w_{21}} \\ 
& = \log \prod_{ij \in \C}  \frac{w_{ji}  (p^\st_i)^2}{w_{ij} (p^\st_j)^2} - \log \frac{w_{21}  (p^\st_1)^2}{w_{12} (p^\st_2)^2} + \log \frac{w_{12}}{w_{21}}
\end{align}
and we conclude.
\end{proof}

\newpage

$\;$

\end{document}